\pgfplotsset{width=10cm,compat=1.9}
\pgfplotsset{compat=newest}
\def\BibTeX{{\rm B\kern-.05em{\sc i\kern-.025em b}\kern-.08em
    T\kern-.1667em\lower.7ex\hbox{E}\kern-.125emX}}
\newcommand{\mycc}{\cellcolor{lightgray}}
\theoremstyle{definition}
\newtheorem{theorem}{Theorem}
\newtheorem{definition}{Definition}
\newtheorem{lemma}[theorem]{Lemma}
  \providecommand\BibTeX{{%
    \normalfont B\kern-0.5em{\scshape i\kern-0.25em b}\kern-0.8em\TeX}}}
\begin{document}

\title{Combining E-Graphs with Abstract Interpretation}


\author{Samuel Coward}
\email{s.coward21@imperial.ac.uk}
\additionalaffiliation{%
  \institution{Numerical Hardware Group, Intel Corporation}
  \city{London}
}
\affiliation{%
  \institution{Electrical and Electronic Engineering \\
Imperial College London \\}
\country{UK}
}
\author{George A. Constantinides}
\email{g.constantinides@imperial.ac.uk}
\affiliation{%
  \institution{Electrical and Electronic Engineering \\
Imperial College London \\}
\country{UK}
}
\author{Theo Drane}
\email{theo.drane@intel.com}
\affiliation{%
  \institution{Numerical Hardware Group \\
Intel Corporation \\}
\country{USA}
}


\begin{abstract}
E-graphs are a data structure that compactly represents equivalent expressions. They are constructed via the repeated application of rewrite rules. Often in practical applications, {\em conditional} rewrite rules are crucial, but their application requires the detection -- at the time the e-graph is being built -- that a condition is valid in the domain of application. 
Detecting condition validity amounts to proving a property of the program. Abstract interpretation is a general method to learn such properties, traditionally used in static analysis tools. We demonstrate that abstract interpretation and e-graph analysis naturally reinforce each other through a tight integration
because (i) the e-graph clustering of equivalent expressions induces natural precision refinement of abstractions and (ii) precise
abstractions allow the application of deeper rewrite rules (and hence potentially even greater precision). 
We develop the theory behind this intuition and present an exemplar interval arithmetic implementation, which we apply to the FPBench suite. 
\end{abstract}

\begin{CCSXML}
<ccs2012>
   <concept>
       <concept_id>10003752.10003790.10011119</concept_id>
       <concept_desc>Theory of computation~Abstraction</concept_desc>
       <concept_significance>500</concept_significance>
       </concept>
   <concept>
       <concept_id>10003752.10003790.10003798</concept_id>
       <concept_desc>Theory of computation~Equational logic and rewriting</concept_desc>
       <concept_significance>500</concept_significance>
       </concept>
   <concept>
       <concept_id>10002950.10003714.10003715.10003725</concept_id>
       <concept_desc>Mathematics of computing~Interval arithmetic</concept_desc>
       <concept_significance>500</concept_significance>
       </concept>
 </ccs2012>
\end{CCSXML}

\ccsdesc[500]{Theory of computation~Abstraction}
\ccsdesc[500]{Theory of computation~Equational logic and rewriting}
\ccsdesc[500]{Mathematics of computing~Interval arithmetic}
\keywords{abstract interpretation, interval arithmetic, static analysis, e-graph}

\maketitle
\section{Introduction}
Equivalence graphs, commonly called e-graphs, provide a compact representation of equivalence classes (e-classes) of expressions, where the notion of equivalence is with respect to some concrete semantics \cite{Nelson1980TechniquesVerification}. The recent \texttt{egg} tool introduced {\em e-class analysis}, a technique to enable program analysis over an e-graph, attaching analysis data to each e-class~\cite{Willsey2021Egg:Saturation}. This paper formalises some of the concepts required to produce e-class analyses enabling e-graph growth via conditional rewrites. We show that partitioning expressions
into e-classes gives rise to a natural lattice-theoretic interpretation for abstract interpretation (AI),
resulting in the generation of precise abstractions. 
By interleaving conditional e-graph rewriting and program analysis the exploration power of the e-graph can be greatly enhanced. Figure~\ref{fig:cycle} visualizes this positive feedback loop.
Figure~\ref{fig:example_egraph} provides an
example, in which interval analyses of equivalent expressions are combined to produce tighter enclosing intervals.

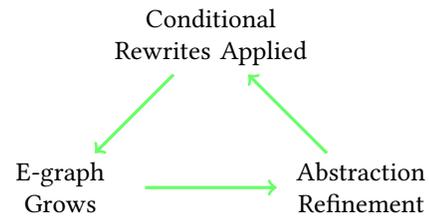
\begin{figure}
    \centering
    \begin{tikzpicture}

\def \n {3}
\def \radius {1.5cm}
\def \margin {14} 





\node[draw=none, text centered, text width = 3cm] at (0,2) (rewrite) {Conditional Rewrites Applied};
\node[draw=none, text centered, text width = 2cm] at (-2,0) (egraph) {E-graph Grows};
\node[draw=none, text centered, text width = 2cm] at (2,0) (abstraction) {Abstraction Refinement};

\draw[->, very thick, green!60] (rewrite) edge (egraph);
\draw[->, very thick, green!60] (egraph) edge (abstraction);
\draw[->, very thick, green!60] (abstraction) edge (rewrite);
\end{tikzpicture}
    \caption{The positive feedback loop between e-graph exploration and abstraction refinement. 
    }
    \label{fig:cycle}
\end{figure}

We develop the general theoretical underpinnings of AI on e-graphs, exploiting rewrites to
produce tight abstractions using a lattice-theoretic formalism. We also provide a sound interpretation of
cycles naturally arising in e-graphs, corresponding to extracting abstract fixpoint equations, and show that 
a known interval algorithm, the Krawczyk method, results as a special case.

The ideas presented here inspired the \texttt{egg} developers to implement interval analysis as part of their PLDI 2022 tutorial\footnote{https://github.com/egraphs-good/egg-tutorial-pldi-2022}. This paper extends and generalizes an abstract presented at the EGRAPHS workshop \cite{Coward2022AbstractE-Graphs} in the following ways:
\begin{itemize}
    \item formalization of AI with e-graphs in lattice theory, 
    \item relating fixpoints to e-graph cycles to automatically discover iterative abstract refinement methods, 
    \item techniques to capture relationships between variables,
    \item an interval arithmetic implementation with associated expression bounding results.
\end{itemize}

A short background overview is provided in \S \ref{sect:background}. In \S \ref{sect:theory} we present the theoretical application of AI to e-graphs and then demonstrate its viability using an interval arithmetic implementation in \S \ref{sect:impl}. Lastly, in \S \ref{sect:results} we present results. 

\begin{figure}
    \centering
    \includegraphics[scale=0.4]{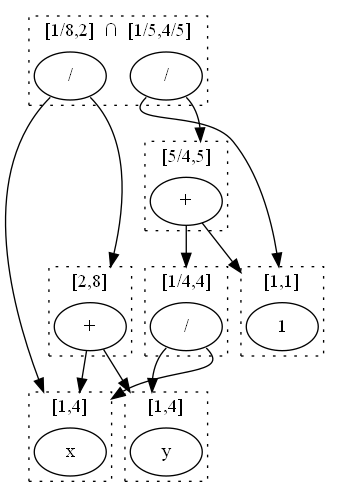}
    \caption{E-graph containing equivalent real arithmetic expressions $\frac{x}{x+y}$ and $\frac{1}{1+(y/x)}$ in the root e-class (at the top). Intervals are associated with each e-class. Input constraints are propagated upwards via an e-class analysis.}
    \label{fig:example_egraph}
\end{figure}
\section{Background}\label{sect:background}
The e-graph data structure is commonly found in theorem provers and solvers~\cite{DeMoura2008Z3:Solver,Detlefs2005Simplify:Checking}. It represents multiple expressions as a graph (e.g.~Figure \ref{fig:example_egraph}), where the nodes represent functions, grouped together into collections of e-classes. Edges connect nodes to e-classes, as a given sub-expression may be implemented using any of the nodes found in the child e-class. E-graphs are often combined with an optimisation technique called equality saturation~\cite{Willsey2021Egg:Saturation,Tate2009EqualityOptimization,Joshi2002Denali:Superoptimizer}, which deploys equivalence preserving transformations to monotonically grow the e-graph and discover alternative equivalent expressions. A recent resurgence of e-graph research~\cite{Willsey2021Egg:Saturation} has seen the technique applied to floating point numerical stability analysis~\cite{Panchekha2015AutomaticallyExpressions}, mapping programs onto hardware accelerators \cite{Smith2021PurePearl} and optimizing hardware designs~\cite{Coward2022AutomaticE-Graphs}. Some works make heavy use of conditional rewrites of the form $\phi \Rightarrow \ell \to r$, for example $x \neq 0 \Rightarrow x/x \to 1$, which require the e-graph construction algorithm to determine whether the rewrite is applicable in each case~\cite{Coward2022AutomaticE-Graphs,Willsey2021Egg:Saturation}.

AI has primarily been applied to static program analysis~\cite{Cousot1977AbstractFixpoints}, where computer programs are automatically analyzed without actually being executed. It uses the theory of abstraction to consider over-approximations of the program behaviour using alternative interpretations~\cite{Cousot1977AbstractFixpoints, Cousot2001AbstractChallenges}. 
Existing tools have incorporated term rewriting to refine their abstractions \cite{Daumas2010CertificationOperators}. 
In \S \ref{subsect:relational_domains} we will discuss relationships between variables allowing the framework to infer the consequences of conditional branches. Previous work on constraint programming has combined constraint awareness with interval analysis \cite{Talbot2019CombiningInterpretation}, whilst Granger identified that constraints are typically better evaluated in the abstract domain \cite{Granger1992ImprovingIterations}. 

For representing disjunctions of intervals, researchers have developed Linear Decision Diagrams (LDDs) and Range Decision Diagrams (RDDs) which are extensions of Binary Decision Diagrams (BDDs)~\cite{Gurfinkel2010Boxes:Boxes, Gange2021DisjunctiveAnalysis}. Previous work on abstract congruence closure \cite{Bachmair2003AbstractClosure,Bachmair2000AbstractSpecializations} has exploited e-graphs to combine different abstract domains \cite{Chang2005AbstractStructures}, but we are not aware of existing
work exploiting the tight interaction between AI and e-graph construction. 


\section{Theory} \label{sect:theory}
\subsection{Abstraction} \label{subsec:abstraction}
From a theoretical viewpoint, AI \cite{Cousot2021PrinciplesInterpretation} is concerned with relationships between lattices, defined via Galois connections.
\begin{definition}[Lattice]
A lattice is a partially ordered set (poset) $\langle L,\leq \rangle$, such that $\forall a,b \in L$ the least upper bound (join) $a \sqcup b$ and the greatest lower bound (meet) $a \sqcap b$ both exist. 
\end{definition}
\begin{definition}[Galois connection]
Given a poset $\langle \mathcal{C}, \sqsubseteq \rangle$, corresponding to the concrete domain, and a poset $\langle \mathcal{A},\preccurlyeq\rangle$, corresponding to the abstract domain, then a function pair $\alpha~\in~\mathcal{C}\rightarrow \mathcal{A}$, $\gamma \in \mathcal{A} \rightarrow \mathcal{C}$, defines a Galois connection iff
\[\forall P\in\mathcal{C}. \; \forall \overline{P}\in\mathcal{A}. \; \alpha(P) \preccurlyeq \overline{P} \Leftrightarrow P \sqsubseteq \gamma(\overline{P}) \textrm{,}\] 
\[\textrm{written }\langle \mathcal{C}, \sqsubseteq \rangle \overset{\alpha}{\underset{\gamma}\rightleftarrows} \langle \mathcal{A},\preccurlyeq \rangle.\]
\end{definition}
The pair $(\alpha,\gamma)$ define the abstraction and concretization respectively, allowing us to over-approximate (i.e.~abstract) concrete properties in $\mathcal{C}$ with abstract properties in $\mathcal{A}$.
\begin{definition}[Sound abstraction~\cite{Cousot2021PrinciplesInterpretation}]
$\overline{P} \in \mathcal{A}$ is a sound abstraction of a concrete property $P\in \mathcal{C}$ iff $P\sqsubseteq \gamma(\overline{P}).$
\end{definition}

Consider expressions evaluated over a domain $\mathcal{D}$. By imposing a canonical ordering on the variable set, we work within a defined subset $I\subseteq \mathcal{D}^n$, which encodes a precondition on the set of (input) variable values. 
Now consider a (concrete) semantics of expressions $\llbracket \cdot \rrbracket_\cdot \in \mathrm{Expr} \to I \to \mathcal{D}$, where $\mathrm{Expr}$ denotes the set of expressions, so $\llbracket e \rrbracket_\rho$ denotes the interpretation of expression $e$ under execution environment (assignment of variables to values) $\rho\in I$.
Let $\llbracket e \rrbracket =  \left \{\llbracket e \rrbracket_\rho \,|\, \rho \in I\right \}$.
The e-graph data structure encodes equivalence under concrete semantics, which we shall now
define precisely.

\begin{definition}[Congruence]
Two expressions $e_a$ and $e_b$ are congruent, $e_a \cong e_b$, iff $\llbracket e_a \rrbracket_\rho = \llbracket e_b \rrbracket_\rho$ for all $\rho \in I$.
\end{definition}
\begin{lemma}
If $e_a \cong e_b$ and $\overline{P}$ is a sound abstraction of $\llbracket e_a \rrbracket$, then $\overline{P}$ is a sound abstraction of $\llbracket e_b \rrbracket$.
\end{lemma}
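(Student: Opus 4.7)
The plan is to observe that the lemma follows almost immediately by unfolding the definitions, with the only real content being the observation that congruence lifts from a pointwise equality of environment-indexed values to a set equality on the images $\llbracket \cdot \rrbracket$. There is no significant obstacle here; the statement is essentially a sanity check that ``sound abstraction'' is well-defined on e-classes rather than on syntactic expressions, which is exactly what makes it meaningful to attach abstractions to e-classes later on.

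First I would unfold the hypothesis $e_a \cong e_b$ to get $\llbracket e_a \rrbracket_\rho = \llbracket e_b \rrbracket_\rho$ for all $\rho \in I$. Next I would use the definition $\llbracket e \rrbracket = \{\llbracket e \rrbracket_\rho \mid \rho \in I\}$ to conclude that the two sets $\llbracket e_a \rrbracket$ and $\llbracket e_b \rrbracket$ coincide, since they are images of the same function $\rho \mapsto \llbracket e_a \rrbracket_\rho = \llbracket e_b \rrbracket_\rho$ over the same domain $I$.

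Finally, I would invoke the definition of sound abstraction: by hypothesis $\llbracket e_a \rrbracket \sqsubseteq \gamma(\overline{P})$, and substituting the set equality just established yields $\llbracket e_b \rrbracket \sqsubseteq \gamma(\overline{P})$, which is precisely the statement that $\overline{P}$ is a sound abstraction of $\llbracket e_b \rrbracket$. The only subtlety worth flagging in the write-up is that $\sqsubseteq$ is being applied to concrete properties (here sets of values in $\mathcal{D}$), so one implicitly assumes that the concrete lattice $\langle \mathcal{C}, \sqsubseteq \rangle$ is the powerset lattice on $\mathcal{D}$ (or contains the relevant images); this is the conventional setup for AI and does not require extra argument.
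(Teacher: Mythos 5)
Your proof is correct and follows exactly the same route as the paper, which simply states ``by definition of congruence'': congruence gives pointwise equality of $\llbracket e_a \rrbracket_\rho$ and $\llbracket e_b \rrbracket_\rho$ over $I$, hence equality of the image sets $\llbracket e_a \rrbracket = \llbracket e_b \rrbracket$, and the soundness condition $\llbracket e_a \rrbracket \sqsubseteq \gamma(\overline{P})$ transfers by substitution. Your additional remark that $\mathcal{C}$ must contain these image sets ordered by $\sqsubseteq$ is a fair observation about the implicit setup, but it is not a gap in the argument.
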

\begin{proof}
by definition of congruence.
\end{proof}
This lemma implies that a sound abstraction of one expression in an e-class is a sound abstraction of all expressions in the e-class. Precision refinement relies on the following, which is a specialization of the more general result \cite{Cousot2011TheProcedures}. 
\begin{lemma}\label{lem:intersect_abstr}
For any two sound abstractions $\overline{P}_a$ and $\overline{P}_b$ of $P$, the meet $\overline{P}_a \sqcap \overline{P}_b$ is also a sound abstraction of $P$.
\end{lemma}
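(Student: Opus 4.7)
The plan is to exploit the defining biconditional of the Galois connection to transfer the hypotheses from the concrete side to the abstract side, apply the universal property of the meet in the abstract lattice, and then transfer back. Since $\overline{P}_a$ and $\overline{P}_b$ are sound abstractions of $P$, we have $P \sqsubseteq \gamma(\overline{P}_a)$ and $P \sqsubseteq \gamma(\overline{P}_b)$ by definition. Applying the Galois condition $P \sqsubseteq \gamma(\overline{Q}) \Leftrightarrow \alpha(P) \preccurlyeq \overline{Q}$ to each yields the abstract-side statements $\alpha(P) \preccurlyeq \overline{P}_a$ and $\alpha(P) \preccurlyeq \overline{P}_b$.

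Now I would invoke the universal property of the meet $\sqcap$ in $\langle \mathcal{A}, \preccurlyeq \rangle$: since $\alpha(P)$ is a lower bound of both $\overline{P}_a$ and $\overline{P}_b$, it is bounded above by their greatest lower bound, i.e., $\alpha(P) \preccurlyeq \overline{P}_a \sqcap \overline{P}_b$. Applying the Galois biconditional in the reverse direction then gives $P \sqsubseteq \gamma(\overline{P}_a \sqcap \overline{P}_b)$, which is exactly soundness of $\overline{P}_a \sqcap \overline{P}_b$ as an abstraction of $P$.

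There is no real obstacle here; the result is essentially the observation that right adjoints in a Galois connection preserve meets, specialised to the soundness predicate. A possible pitfall to flag is the temptation to argue directly via monotonicity of $\gamma$: from $\overline{P}_a \sqcap \overline{P}_b \preccurlyeq \overline{P}_a$ one obtains $\gamma(\overline{P}_a \sqcap \overline{P}_b) \sqsubseteq \gamma(\overline{P}_a)$, which points the wrong way and cannot be chained with $P \sqsubseteq \gamma(\overline{P}_a)$. The Galois-connection route avoids this by converting $\sqsubseteq$-statements about $P$ into $\preccurlyeq$-statements about $\alpha(P)$, where the meet behaves in the required direction.
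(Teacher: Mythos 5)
Your proof is correct and is essentially identical to the paper's own argument: both convert the soundness hypotheses to $\alpha(P) \preccurlyeq \overline{P}_a$ and $\alpha(P) \preccurlyeq \overline{P}_b$ via the Galois connection, apply the greatest-lower-bound property of the meet, and transfer back with the Galois connection in the other direction. Your additional remark about why a direct monotonicity-of-$\gamma$ argument points the wrong way is a nice observation, but the core proof matches the paper step for step.
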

\begin{proof}
$P\sqsubseteq \gamma(\overline{P}_a)$ (sound abstraction) $\Rightarrow \alpha(P) \preccurlyeq \overline{P}_a$ (Galois connection) and similarly $P\sqsubseteq \gamma(\overline{P}_b) \Rightarrow \alpha(P) \preccurlyeq \overline{P}_b$. Therefore $\alpha(P) \preccurlyeq \overline{P}_a \sqcap \overline{P}_b$ (meet definition) and hence $P\sqsubseteq \gamma(\overline{P}_a \sqcap \overline{P}_b)$ (Galois connection).
\end{proof}

\subsection{Application to E-graphs}
Consider an e-graph. Let $S$ denote the set of e-classes, and $\mathcal{N}_s$ the set of nodes in the equivalence class $s\in S$. With each e-class, associate an abstraction $A \in \mathcal{A}$ and write $\mathcal{A}\llbracket s \rrbracket = A$.  Interpreting a $k$-arity node $n$ of function $f$ with children classes $s_1, ..., s_k$, using an arbitrary sound abstraction $\bar{f}$:
\begin{equation} \label{eqn:node_interp}
    \mathcal{A}\llbracket n \rrbracket = \bar{f}(\mathcal{A}\llbracket s_1 \rrbracket, ..., \mathcal{A}\llbracket s_k \rrbracket).
\end{equation}
$0$-arity nodes are either constants with exact abstractions in $\mathcal{A}$ or variables with user specified abstract constraints. 

For acyclic e-graphs, we propagate the known abstractions upwards using \eqref{eqn:node_interp}, taking the greatest lower bound (meet) across all nodes in the e-class.
\begin{equation}\label{eqn:class_prop}
    \mathcal{A}\llbracket s \rrbracket = \bigsqcap_{n\in \mathcal{N}_s} \mathcal{A}\llbracket n \rrbracket
\end{equation}
The propagation algorithm is described in Figure \ref{fig:property_prop}, where 
\[\texttt{make}(n) = \mathcal{A}\llbracket n \rrbracket \textrm{ and } \texttt{meet}(A_1, A_2)=A_1 \sqcap A_2.\]
These functions are analogous to those described for an e-class analysis~\cite{Willsey2021Egg:Saturation}, but replace their \texttt{join} with a \texttt{meet}. 

\begin{figure}
   \begin{lstlisting}[language=Python, basicstyle=\small]
    workqueue = egraph.classes().leaves()
    while !workqueue.is_empty()
      s = workqueue.dequeue()
      for n in s.nodes()
        skip_node = false
        for child_s in n.children()
          if child_s.uninitialized
            workqueue.enqueue(s)
            skip_node = true 
        if skip_node
            continue
        elif s.uninitialized
            s.data = make(n)
            s.uninitialized = false            
            workqueue.enqueue(s.parents())
        elif !(s.data<=meet(s.data,make(n)))
            s.data = meet(s.data,make(n))
            workqueue.enqueue(s.parents())
\end{lstlisting}
  \caption{
      Pseudocode for abstract property propagation in an e-graph.
    }
  \label{fig:property_prop}
\end{figure}

By lifting the abstract analysis from expressions to e-classes of expressions we construct a more precise analysis. In the abstract domain the notion of equivalence is different, $n_a, n_b \in \mathcal{N}_s \not \Rightarrow \mathcal{A}\llbracket n_a \rrbracket = \mathcal{A}\llbracket n_b \rrbracket$, which results in tighter abstractions since the meet corresponds to a more precise abstraction in $\mathcal{A}$.
In the algorithm in Figure~\ref{fig:property_prop}, by initializing the \texttt{workqueue} with only the modified e-classes after application of a rewrite, the abstract properties of the e-graph can be evaluated on the fly. For an acyclic e-graph, the propagation of each merge is worst case linear in the size of the e-graph. On-the-fly evaluation facilitates conditional rewrite application as more precise properties are discovered during construction. This is used in \S \ref{sect:impl}. 

A positive feedback loop is created by combining AI and e-graphs (Figure \ref{fig:cycle}). A larger space of equivalent expressions is explored as more rewrites can be proven to be valid at exploration time. In turn, expression abstractions are further refined by discovering more equivalent expressions, allowing even more valid rewrites, and the cycle continues. Additionally, several equivalent expressions may contribute to the tight final abstraction. An example is shown in \S \ref{sect:impl}. 

\subsection{Cyclic E-graphs and Fixpoints}\label{sect:cyclic}
Cyclic e-graphs arise when an expression is equivalent to a sub-expression of itself with respect to concrete semantics, for example $e \times 1\cong e$. Let $e \cong e'$ where $e$ appears as a subterm in $e'$. Treating
other subterms as absorbed into the function, let $f : \mathcal{D} \to \mathcal{D}$ be the interpretation of $e'$ as a
(concrete) function of $\llbracket e \rrbracket_\rho$, so that -- in particular -- $f(\llbracket e \rrbracket_\rho) = \llbracket e \rrbracket_\rho$ due to the congruence and hence $\llbracket e \rrbracket = \{f(\llbracket e \rrbracket_\rho) \,|\, \rho \in I\}$.
Abstracting $f$ via a sound abstraction $\bar{f}$, yields the corresponding abstract fixpoint equation $a = a \sqcap \bar{f}(a)$ where the meet operation arises from \eqref{eqn:class_prop}.

Now consider the function $\tilde{f}(a) = a \sqcap \bar{f}(a)$. The decreasing sequence defined by $a_{n+1} = \tilde{f}(a_n)$ corresponds to applying the abstract property propagation around a cycle in the e-graph, given an initial sound abstraction $a_0$ of $\llbracket e \rrbracket$.
\begin{lemma}
$\alpha( \llbracket e \rrbracket )$ is a fixpoint of $\tilde{f}$.  
\end{lemma}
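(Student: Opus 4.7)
The plan is to reduce the fixpoint condition $\tilde{f}(\alpha(\llbracket e \rrbracket)) = \alpha(\llbracket e \rrbracket)$, i.e.\ $\alpha(\llbracket e \rrbracket) \sqcap \bar{f}(\alpha(\llbracket e \rrbracket)) = \alpha(\llbracket e \rrbracket)$, to the single inequality
\[ \alpha(\llbracket e \rrbracket) \preccurlyeq \bar{f}(\alpha(\llbracket e \rrbracket)), \]
since by the meet definition this inequality is equivalent to the desired equality. So the whole proof boils down to establishing this one bound.

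First I would lift the pointwise congruence $f(\llbracket e \rrbracket_\rho) = \llbracket e \rrbracket_\rho$ to the property level. Interpreting $f$ on the concrete property domain as taking images of sets under $f$, one has
\[ f(\llbracket e \rrbracket) = \{ f(\llbracket e \rrbracket_\rho) \mid \rho \in I \} = \{ \llbracket e \rrbracket_\rho \mid \rho \in I \} = \llbracket e \rrbracket. \]
Thus $\llbracket e \rrbracket$ is a (concrete) fixpoint of $f$; this is the step that makes the congruence do real work.

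Next I would unfold $\bar f = \alpha \circ f \circ \gamma$ and chain together standard Galois-connection facts. Extensivity of $\gamma\circ\alpha$ gives $\llbracket e \rrbracket \sqsubseteq \gamma(\alpha(\llbracket e \rrbracket))$; monotonicity of $f$ (needed for $\bar f$ to be a sound abstraction) then gives $f(\llbracket e \rrbracket) \sqsubseteq f(\gamma(\alpha(\llbracket e \rrbracket)))$; combined with the fixpoint property $f(\llbracket e \rrbracket)=\llbracket e \rrbracket$ this yields $\llbracket e \rrbracket \sqsubseteq f(\gamma(\alpha(\llbracket e \rrbracket)))$. Applying the monotone $\alpha$ to both sides gives exactly $\alpha(\llbracket e \rrbracket) \preccurlyeq \alpha(f(\gamma(\alpha(\llbracket e \rrbracket)))) = \bar f(\alpha(\llbracket e \rrbracket))$, which is what we needed.

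The only mild obstacle is conceptual rather than technical: we must be clear that $\mathcal{C}$ is a domain of properties (sets of values) and that $f$ acts on $\mathcal{C}$ by taking images, so that the pointwise congruence on $\mathcal{D}$ genuinely lifts to $f(\llbracket e \rrbracket)=\llbracket e \rrbracket$ in $\mathcal{C}$. Once that is granted, the argument is a three-line chase through extensivity, monotonicity of $f$, and monotonicity of $\alpha$, and no induction or fixpoint theorem is required.
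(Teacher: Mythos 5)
Your proof is correct and takes essentially the same route as the paper's: both reduce the fixpoint equation to the single inequality $\alpha(\llbracket e \rrbracket) \preccurlyeq \bar{f}(\alpha(\llbracket e \rrbracket))$ via the meet definition, lift the pointwise congruence to $f(\llbracket e \rrbracket) = \llbracket e \rrbracket$, and then conclude. The only difference is that where the paper justifies the key inequality with the single phrase ``(sound abstraction)'', you unfold that step into its Galois-connection ingredients (extensivity of $\gamma\circ\alpha$, monotonicity of $f$ and of $\alpha$) --- added detail, not a different argument.
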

\begin{proof}
\begin{align*}
    \alpha(\llbracket e \rrbracket) &= \alpha\left(\left\{f(\llbracket e \rrbracket_\rho) \,|\, \rho \in I\right \}\right)&\textrm{ (congruence)}\\
    &\preccurlyeq \bar{f}(\alpha(\llbracket e \rrbracket)) &\textrm{ (sound abstraction)}
\end{align*}
Hence $\tilde{f}( \alpha(\llbracket e \rrbracket) ) = \alpha(\llbracket e \rrbracket) \sqcap \bar{f}(\alpha(\llbracket e \rrbracket)) = \alpha(\llbracket e \rrbracket)$ (meet definition).
\end{proof}

\begin{lemma}
$a_{n}$ is a sound abstraction of $\llbracket e \rrbracket$ for all $n\in \mathbb{N}$.
\end{lemma}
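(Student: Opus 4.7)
The plan is to proceed by induction on $n$, leveraging Lemma~\ref{lem:intersect_abstr} together with soundness of the abstract function $\bar{f}$. The base case $n=0$ is immediate from the hypothesis that $a_0$ is an initial sound abstraction of $\llbracket e \rrbracket$.

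For the inductive step, assume $a_n$ is a sound abstraction of $\llbracket e \rrbracket$, i.e., $\llbracket e \rrbracket \sqsubseteq \gamma(a_n)$. I would first argue that $\bar{f}(a_n)$ is also a sound abstraction of $\llbracket e \rrbracket$. The key observation is that because $e \cong e'$ and $f$ is the concrete interpretation of $e'$ as a function of $\llbracket e \rrbracket_\rho$, we have $f(\llbracket e \rrbracket) = \llbracket e \rrbracket$ pointwise. Soundness of the abstraction $\bar{f}$ means that $\bar{f}$ maps sound abstractions of any $P$ to sound abstractions of $f(P)$; applying this with $P = \llbracket e \rrbracket$ and using the fixed-point identity $f(\llbracket e \rrbracket) = \llbracket e \rrbracket$ gives soundness of $\bar{f}(a_n)$ for $\llbracket e \rrbracket$.

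Having two sound abstractions, $a_n$ and $\bar{f}(a_n)$, of the same concrete property $\llbracket e \rrbracket$, I would then invoke Lemma~\ref{lem:intersect_abstr} to conclude that their meet $a_n \sqcap \bar{f}(a_n) = \tilde{f}(a_n) = a_{n+1}$ is a sound abstraction of $\llbracket e \rrbracket$, completing the induction.

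The main obstacle is being precise about the step where $\bar{f}(a_n)$ is shown sound. The earlier text defines $\bar{f} = \alpha \circ f \circ \gamma$ for node interpretation, so it is worth spelling out explicitly why this composition, evaluated at a sound abstraction of $\llbracket e \rrbracket$, yields a sound abstraction of $f(\llbracket e \rrbracket)$: applying $\gamma$ to $a_n$ gives an over-approximation of $\llbracket e \rrbracket$, mapping through $f$ preserves containment (in the concrete sense on sets), and $\alpha$ composed with $\gamma$ is extensive in the abstract order, which together with the Galois connection yields the required soundness. Once this is in place, the rest is a routine application of the previously established lemma.
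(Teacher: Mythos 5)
Your proof is correct and takes essentially the same route as the paper: induction on $n$, using the congruence identity $f(\llbracket e \rrbracket) = \llbracket e \rrbracket$ and soundness of $\bar{f}$ to show $\bar{f}(a_n)$ is a sound abstraction of $\llbracket e \rrbracket$, then Lemma~\ref{lem:intersect_abstr} to conclude that the meet $a_{n+1} = a_n \sqcap \bar{f}(a_n)$ is sound; your elaboration of why $\bar{f} = \alpha \circ f \circ \gamma$ preserves soundness is exactly the step the paper compresses into ``(sound abstraction of $f$)''. One small wording correction in that elaboration: the extensive composition is $\gamma \circ \alpha$ in the \emph{concrete} order ($P \sqsubseteq \gamma(\alpha(P))$), whereas $\alpha \circ \gamma$ is reductive in the abstract order, and it is the former (applied to $f(\gamma(a_n))$, together with monotonicity of the lifted $f$) that your argument needs.
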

\begin{proof}
By induction, $\llbracket e \rrbracket \sqsubseteq \gamma(a_0)$ and assume $\llbracket e \rrbracket \sqsubseteq \gamma(a_n)$. 
$\llbracket e \rrbracket = \left\{f(\llbracket e \rrbracket_\rho) \,|\, \rho \in I\right \} \sqsubseteq \gamma(\bar{f}(a_n))$ (sound abstraction of $f$). Hence $a_{n+1} = a_n \sqcap \bar{f}(a_n)$ is a sound abstraction of $\llbracket e \rrbracket$ (Lemma \ref{lem:intersect_abstr}) for all $n$.
\end{proof}

Collecting these results, for some fixpoint $a^*$ we have
\begin{equation}\label{eqn:fixpoint_chain}
\alpha(\llbracket e \rrbracket) \preccurlyeq a^{*} \preccurlyeq \ldots \preccurlyeq a_1 \preccurlyeq a_0.    
\end{equation}
Thus computing abstractions around the loop refines the abstraction and is guaranteed to terminate if the lattice $\langle \mathcal{A}, \preccurlyeq\rangle$ satisfies the descending chain condition, as any finite abstract domain will~\cite{Birkhoff1958VonTheory}. Note that the fixpoint $a^*$ is neither greatest, least nor unique. This can be seen because the top and bottom elements of the lattice are both also fixedpoints of $\tilde{f}$, but from \eqref{eqn:fixpoint_chain} the computed fixedpoint is between these two other fixedpoints. The algorithm in Figure \ref{fig:property_prop} will correctly apply abstract property propagation around loops, terminating if the sequence $a_n$ converges in a finite number of steps. Of course for abstract domains with infinite descending chains standard techniques such as narrowing apply \cite{Cousot1992ComparingInterpretation}.



\subsection{Relational Domains} \label{subsect:relational_domains}

For simplicity of exposition, we focus on non-relational domains in the discussion above. However, it is important to note that the combination of non-relational domains with the {\em relational} information provided by rewrite rules provides a stronger analysis than classical non-relational domains. By introducing an additional $\texttt{ASSUME}$ node that captures the domain refinement implied by design constraints, relationships between variables can be captured, enabling further rewrites. We illustrate this point via an example.
\begin{equation} \label{eqn:relational_start}
x==y \, ? \, x+y : 0.    
\end{equation}
A human quickly realises the relationship between $x$ and $y$ in the true branch and can replace $x+y$ by $2\times y$. In the e-graph framework, the expression is automatically rewritten to identify `new' terms which simplify the conditional reasoning. 
As the e-graph grows \eqref{eqn:relational_start} will be found to be equivalent to:
\begin{equation} \label{eqn:relational_int}
z==0 \, ? \, z+2\times y : 0,\quad \textrm{where }z=x-y.    
\end{equation}
The e-graph identifies a new variable, $z$, which must be a constant, zero, on the true branch. The $\texttt{ASSUME}$ node is introduced by rewriting the ternary operation. 
\begin{equation} \label{eqn:assume_int}
z==0 \, ? \, \texttt{ASSUME}(z+2\times y,\,z==0) : \texttt{ASSUME}(0,\,z\neq 0)    
\end{equation}
The $z==0$ constraint is automatically pushed down the expression tree. At the point at which the propagated condition reaches $z$ it is automatically rewritten to constant zero. The e-graph continues to grow, until it discovers the simplified equivalent expression, where the $\texttt{ASSUME}$ nodes are dropped.
\begin{equation} \label{eqn:relational_final}
z==0 \, ? \, 2\times y : 0,\quad \textrm{where }z=x-y.    
\end{equation}

We combined an existing hardware optimization e-graph tool \cite{Coward2022AutomaticE-Graphs} with the program analysis techniques presented here, where we introduce $\texttt{ASSUME}$ nodes in more detail \cite{Coward2023AutomatingE-Graphs}. In this work, we implement rewriting over real valued expressions but in~\cite{Coward2023AutomatingE-Graphs} we apply the techniques presented here to fixed width integer valued expressions. 


\section{Implementation}\label{sect:impl}

To demonstrate the theory described above, we implement interval arithmetic (IA)~\cite{Moore2009IntroductionAnalysis} for real valued expressions using the extensible \texttt{egg} library, as an e-class analysis \cite{Willsey2021Egg:Saturation}. We consider a concrete domain corresponding to sets of extended real numbers, i.e. $\mathcal{C} = \mathcal{P}(\mathbb{R} \cup \{-\infty, +\infty\})$ where $\mathcal{P}$ denotes the power set. We associate each expression with a \texttt{binary64}~\cite{2008IEEEArithmetic} valued interval (a finite abstract domain),
\[\mathcal{A} = \left\{[a,b] \; | \; a \leq b, a,b \in \texttt{binary64}\setminus \{\texttt{NaN}\} \right\} \cup \{\emptyset\}.\] 
In this setting the abstraction and concretization functions are as follows (infima/suprema always exist in this setting):
\begin{align}
\alpha(X) &= \left[\textrm{round\_down}(\inf X), \textrm{round\_up}(\sup X)\right] \\ 
\gamma([a,b]) & = [a, b]\\
\alpha(\emptyset) &= \emptyset, \gamma(\emptyset) = \emptyset
\end{align}

For this work we supported the following set of operators, +, -, $\times$, /, $\sqrt{}$, \texttt{pow}, \texttt{exp} and \texttt{ln}. 
To ensure correctness, we use `outwardly rounded IA' which conservatively rounds upper bounds towards $+ \infty$ (round\_up) and lower bounds towards $- \infty$ (round\_down) \cite{Moore2009IntroductionAnalysis,Kulisch1981ComputerPractice}.
For the elementary functions, $\sqrt{},\texttt{exp}$ and $\texttt{ln}$, we use default library implementations but are unable to control the rounding mode, so conservatively add or subtract one unit in the last place for upper and lower bounds respectively. 
Provided \texttt{NaN}s do not appear in the input expression evaluation they are not generated by the e-graph exploration. Furthermore abstract intervals containing $-0$ shall be mapped by $\gamma$ to sets containing $0\in \mathcal{C}$.

In this case we use an abstraction of a given function $f$, $\bar{f}~=~\alpha~\circ~f~\circ~\gamma$.
For e-class $s$ under this interpretation, \eqref{eqn:class_prop} uses the intersection operation, the meet operation of the lattice of intervals.
\begin{equation}\label{eqn:int_class_update}
    \mathcal{A}\llbracket s \rrbracket = \bigcap_{n\in \mathcal{N}_s} \llbracket n \rrbracket
\end{equation}
This relationship generates monotonically narrowing interval abstractions. 0-arity nodes represent either constants associated with degenerate intervals or variables taking user defined interval constraints.


The classical problem of interval arithmetic is the so-called `dependency problem', arising because the domain does not capture correlations between multiple occurrences of a single variable. Consider $x\in [0,1]$, under classical IA:
\begin{equation} \label{eqn:ia_limits}
\mathcal{A}\llbracket x-x \rrbracket = [0,1]-[0,1] = [0-1, 1-0] = [-1,1].   
\end{equation}
The e-graph framework discovers, via term rewriting, $x-x\cong 0$ and by \eqref{eqn:int_class_update} the expression is now correctly abstracted by the (much tighter) degenerate interval $[0,0]$.

\begin{table}
    \centering
    \caption{Additional IA optimization rewrites.}
    \begin{tabular}{|c|c|c|}
    \hline
    Class & Rewrite & Condition \\
    \hline 
    \mycc Factor       & \mycc $ab \pm a c \rightarrow a (b\pm c)$ & \mycc True\\
    Binom. & $1/(1-a) \rightarrow 1 + a / (1 - a)$ & $0\not \in \llbracket 1-a \rrbracket$\\
    \mycc Frac   & \mycc $b/c \pm a \rightarrow (b \pm a c)/c$& \mycc $0 \not \in \llbracket c\rrbracket$\\
    \multirow{2}{4em}{Div. \cite{Moore2009IntroductionAnalysis}}  & $ a/b \rightarrow 1/(b/a)$ & $0\not \in \llbracket a \rrbracket \cup \llbracket b \rrbracket$\\
      & $a/b \rightarrow 1 +(a-b)/b$ & $0 \not \in \llbracket b \rrbracket$\\
    \mycc Poly & \mycc $a^2 - 1 \rightarrow (a - 1)(a+1)$ & \mycc True \\
    Elem. & $\ln(e^a) \rightarrow a$ & True\\
    \hline    
    \end{tabular}
    \label{tab:rewrite_table}
\end{table}

We use a set of 39 rewrites, defining equivalences of real valued expressions. The basic arithmetic rewrites are commutativity, associativity, distributivity, cancellation and idempotent operation reduction across addition, subtraction, multiplication and division. Conversion rewrites describe the natural equivalence between the power function and multiplication/division. Table \ref{tab:rewrite_table} contains the remaining rewrites. 

Conditional rewrites, e.g. ``Div.'', are only valid on a subset of the input domain. Via an IA the e-graph can prove validity of such rules. In \eqref{eqn:cond_apply} IA can confirm that $0\not \in \llbracket x+y \rrbracket$, in order to remove multiple occurrences of variables resulting in expression bound improvements. 
\begin{equation}\label{eqn:cond_apply}
    \frac{x+y}{x+y+1}\rightarrow ... \rightarrow \frac{1}{1+\frac{1}{x+y}}
\end{equation}




Multiple expressions in an e-class can independently contribute to a tight abstraction. Consider the following equivalent expressions for variables $x \in [0,1]$ and $y \in [1,2]$.
\begin{align}
    & 1 - \frac{2y}{x+y} &\in \left[-3,\frac{1}{3}\right] \label{eqn:rw_a}  \\
    &\cong  \frac{x-y}{x+y}    &\in [-2,0] \label{eqn:rw_b}\\
    &\cong  \frac{2x}{x+y} - 1 &\in [-1,1] \label{eqn:rw_c}.
\end{align}
All three reside in the same e-class within an e-graph with associated interval $[-3,\frac{1}{3}] \cap [-2,0] \cap [-1,1] = [-1,0]$. Thus, given the first expression~\eqref{eqn:rw_a}, the e-graph generates a tight interval enclosure using two distinct equivalent expressions for the upper \eqref{eqn:rw_b} and lower \eqref{eqn:rw_c} bounds.

\section{Results}\label{sect:results}
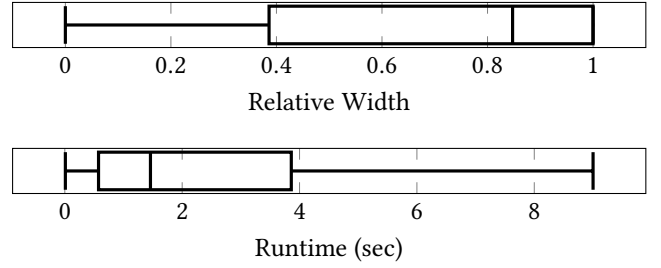
\begin{figure}
    \centering
    \begin{tikzpicture}
\begin{axis}[
y=0.5cm,
ytick=\empty,
xlabel=Relative Width,
]
\addplot+[
  boxplot prepared={
    lower whisker=0,
    lower quartile=0.386,
    median=0.8475,
    upper quartile=1,
    upper whisker=1,
    box extend=1,  
    whisker extend=1, 
    every box/.style={very thick,draw=black},
    every whisker/.style={black,very thick},
    every median/.style={black, very thick},
  },
]
table[row sep=\\,y index=0] {
data\\ 
};
\end{axis}
\end{tikzpicture}

\vspace{1em}

\begin{tikzpicture}
\begin{axis}[
y=0.5cm,
ytick=\empty,
xlabel=Runtime (sec),
]
\addplot+[
  boxplot prepared={
    lower whisker=0.0093991,
    lower quartile=0.5746177,
    median=1.4594177,
    upper quartile=3.8594865,
    upper whisker=9.001914,
    box extend=1,  
    whisker extend=1, 
    every box/.style={very thick,draw=black},
    every whisker/.style={black,very thick},
    every median/.style={black, very thick},
  },
]
table[row sep=\\,y index=0] {
data\\ 
};
\end{axis}
\end{tikzpicture}
    \caption{Relative interval width (optimized width/naive width) and runtime boxplots to demonstrate the distribution of results on the FPBench suite. 
    }
    \label{fig:boxplots}
\end{figure}

Based on the theory introduced above we describe a real valued expression bounding tool in Rust using the \texttt{egg} library. All test cases were run on an Intel i7-10610U CPU.

\subsection{Benchmarks}
We evaluate the implementation using 40 benchmarks from the FPTaylor \cite{Solovyev2018RigorousExpansions} supported subset of the FPBench benchmark suite \cite{Damouche2017TowardAnalysis}. We allow four iterations of e-graph rewriting, as further rewriting iterations do not yield significant improvements in interval width on these modest benchmarks. 

Across these benchmarks, the inclusion of IA and domain specific rewrites increased the number of e-graph nodes by 4\% on average but by up to 84\%. This demonstrates the additional rewrites that have been applied as a result of combining e-graphs and AI. Figure \ref{fig:boxplots} summarises the distribution of the interval width improvement and runtime across the benchmarks. The relative width boxplot shows that, on average, e-graph rewriting reduced the interval to 85\% of the width of the naive interval approximation. In a number of cases, the interval obtained from e-graph rewriting was reduced by almost 100\%. The runtime boxplot shows that runtimes remained small. There is little correlation between the runtime and bound improvement. The overhead of incorporating IA into the e-graph increased runtimes by less than 1\% on average.


\subsection{Iterative Method Discovery}
In \S \ref{sect:cyclic} we discussed how cyclic e-graphs can discover iterative refinements. The Krawczyk method~\cite{Moore2009IntroductionAnalysis} is a known algorithm to generate increasingly precise element-wise interval enclosures of solutions of linear systems of equations $Ax = b$, where $A$ is an $n$-by-$n$ matrix and $b$ is an $n$-dimensional vector.
Letting $X^0$ be an initial interval enclosure of the solutions, the Krawcyzk method uses an update formula of the form,
\begin{equation*}
X^{k+1} = \left(Yb + (I-YA)X^k\right) \cap X^k,\textrm{ where } Y=\textrm{mid}(A)^{-1}.
\end{equation*}
mid(A) is the element-wise interval midpoint of the matrix. This sequence, via interval extension and intersection, corresponds to a sequence of tightening bounds on the solution $x$, which converges provided the matrix norm $||I-YA||<1$.

We consider a specific instance of this problem,
\begin{equation}
    \begin{pmatrix}
    1 & y\\
    y & 1
    \end{pmatrix} 
    \begin{pmatrix}
    x_1\\
    x_2
    \end{pmatrix} =
    \begin{pmatrix}
    b_1\\
    b_2
    \end{pmatrix}
    \textrm{, where } y\in \left[-\frac{1}{2},\frac{1}{2}\right].
\end{equation}
\begin{equation} \label{eqn:kraw_x1}
    X_1^{k+1} = \left (b_1 - yX_2^k \right) \cap X_1^k,\;
    X_2^{k+1} = \left (b_2 - yX_1^k \right) \cap X_2^k.
\end{equation}

A naive solution in the concrete domain, $x = A^{-1} b$, yields,
\begin{equation}
x_1 = \frac{1}{1-y^2}(b_1 - b_2y), \; x_2 = \frac{1}{1-y^2}(b_2 - b_1y).
\end{equation}
Initialising the e-graph with these expressions, the solution for $x_1$ can be automatically rewritten such that \eqref{eqn:kraw_x1} arises in the abstract domain. The ``Binom.'' rewrite from Table \ref{tab:rewrite_table} introduces a loop into the e-graph, which when combined with distributivity rules and ``Factor'' from Table \ref{tab:rewrite_table} yields,

\begin{equation}
    x_1 = b_1 - y\left(b_2 + (b_2 y^2 - b_1 y)\frac{1}{1-y^2}\right).
\end{equation}
After applying, ``Frac'', and cancelling, the e-graph contains
\begin{equation}
    x_1 = b_1 - y(b_2 - b_1 y)\frac{1}{1-y^2} = b_1 - y x_2. 
\end{equation}
When a cycle is introduced, the IA update procedure will continue to iteratively evaluate the loop, taking the intersection with the previous iteration as described in \S \ref{sect:cyclic}.



\section{Conclusion}\label{sect:conclusion}
We present a combination of abstract interpretation and e-graphs, demonstrating the natural interpretation
of e-class partitions as meet operators in a lattice, resulting in precise abstractions. Of key importance
is the positive feedback loop between e-graph exploration and abstraction refinement, as the precision 
then allows the application of conditional rewrite rules, which can be applied in many domains and may
further improve abstraction precision. An exemplar interval arithmetic implementation has demonstrated the value of this idea, including
automated discovery of a known algorithm for iterative refinement. 
Furthermore, an existing hardware optimization e-graph application has already benefitted from incorporating these analysis techniques. We believe that other e-graph applications could also incorporate AI to extend their capabilities for relatively low overhead. 
Future work will explore additional abstract domains and their incorporation into e-graph optimization tools.

\bibliographystyle{ACM-Reference-Format}
\bibliography{references}

\end{document}